\def\bc{\begin{center}}
	\def\ec{\end{center}}
\def\s2c{\vskip 2cm}
\def\bt{\begin{Theorem}}
	\def\et{\end{Theorem}}
\def\bd{\begin{Definition}}
	\def\ed{\end{Definition}}
\def\bl{\begin{Lemma}}
	\def\el{\end{Lemma}}
\def\be{\begin{Example}}
	\def\ee{\end{Example}}
\def\bcor{\begin{Corollary}}
	\def\ecor{\end{Corollary}}
\def\br{\begin{Remark}}
	\def\er{\end{Remark}}
\def\mysection{\setcounter{equation}{0}\section}
\newtheorem{Lemma}{Lemma}[section]
\newtheorem{Theorem}[Lemma]{Theorem}
\newtheorem{Definition}[Lemma]{Definition}
\newtheorem{Proposition}[Lemma]{Propostion}
\newtheorem{Corollary}[Lemma]{Corollary}
\newtheorem{Remark}[Lemma]{Remark}
\date{}
\title{Age structured SIR model for the spread of infectious diseases through indirect contacts}
\author {Manoj Kumar, Syed Abbas\\
School of Basic Sciences,\\
Indian Institute of Technology Mandi,\\	Kamand (H.P.) - 175005, India
	\\Email :  sabbas.iitk@gmail.com}
\begin{document}
	\maketitle
	\author
	\noindent {\bf Abstract} : In this article, we discuss an age-structured SIR model in which disease not only spread through direct person to person contacts for e.g. infection due to surface contamination but it can also spread through indirect contacts. It is evident that age also plays a crucial role in SARS virus infection including COVID-19 infection. We formulate our model as an abstract semilinear Cauchy problem in an appropriate Banach space to show the existence of solution and also show the existence of steady states. It is assumed in this work that the population is in a demographic stationary state and show that there is no disease-free equilibrium point as long as there is a transmission of infection due to the indirect contacts in the environment.

		\vskip .5cm \noindent {\em\bf Key Words} :SIR Model, Age structured population model, Riesz-Fr\'echet-Kolmogorov theorem, Semigroups of operators.
	\vskip .5cm \noindent {\em \bf AMS Subject Classification}: 00A71; 34G20; 47D03
\mysection{Introduction}

Infectious diseases are one of threat to humanity. Due to increase in world population and mobility, pathogen transmission is easy and it is difficult to control the spread of disease. Viral transmission depends both on the interaction with host population and with the environment.     

Mathematical models can project how infectious diseases progress. The model can suggest the possible outcome of an epidemic which will help agencies to take well though measures. In 1927, Kermack and McKendrick \cite{kermack1927contribution} introduced a model (called SIR model) by considering a given population having three compartments. The compartments are divided into individuals in susceptible $S$; infected $I$;, and removed $R$ class. It is very important to study infectious diseases and their possible nature of spread.

Most of the cases it is assumed that the spread of infectious diseases is through person to person direct contact. But some infectious diseases can also spread through indirect contacts like contact with contaminated surface having virus on it i.e. if a person touches their  eyes, mouth or nose after touching fomites or animals to human transmission. Through many studies it is observed that coronaviruses (including SARS Cov2) may persist on objects or surfaces for some hours to many days. The persistence depends on different factors (e.g. surface type,  humidity or temperature of the environment). Fomites consist of both permeable and non permeable objects or surfaces that can be contaminated with pathogenic micro-organisms and serves as a vehicle in transmission. SARS-CoV-2, the coronavirus (CoV) causing COVID-19 is creating the most severe health issues for individuals above the age of 60 — with particularly fatal results for those individuals having age above 80. In the United states, 31-59\%  of individuals ages 75 to 84 diagnosed with the virus having svere symptoms due to which hospitalization is necesaary, in comparing with 14-21\% of confirmed patients ages between 20 to 44. This data is based on US Centers for Disease Control and Prevention (CDC) report. So, it is natural to consider age structure while modeling the infectious disease transmission. The risk of transmission of infectious disease varies in different environments, for example at school, at home, at work place or in the community. \cite{RN50} studied projected age-specific contact rates for countries in different stages in development and with different demographic structures to those studied in POLYMOD (a European Commission project), which provide validated approximations to social contact patterns when directly measured data is not available. The data plotted in Fig. 1, Fig. 2, Fig. 3 and Fig. 4 show the relation between age of individual and age of contact i.e. number of contacts made by individuals at all locations, at home, school and work respectively. Yellowish color on the diagonal of Fig. 1 and Fig. 2 shows that same age individuals have more chances of direct contacts, so transmission coefficient will be large for same age individuals. \\
 \includegraphics[width=16.5cm,height=6.5cm]{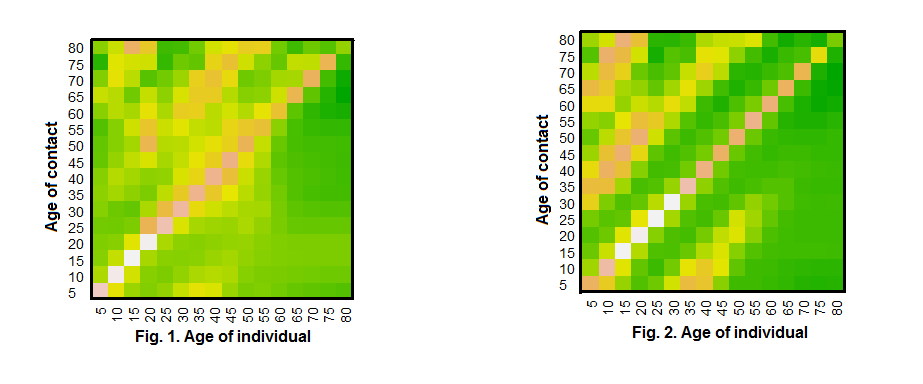} \\
 \includegraphics[width=16.5cm,height=6.5cm]{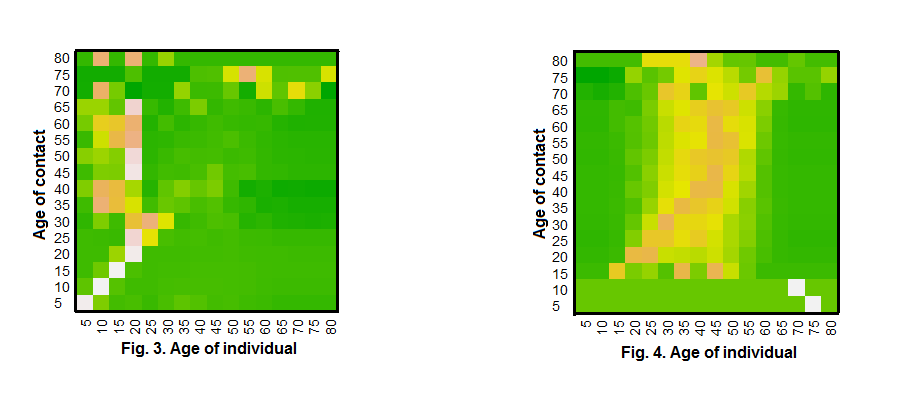} \\
 \\
 \\
 From the above heat maps, it is clear that it is natural to add age structure in ordinary differential equation(ODE) based SIR models. So, after adding age structure, the ODE based SIR models become partial differential equation (PDE) models that are more complex to analyze.
There is extensive literature available on age-structured SIR models (for more details see 
\cite{MR1057046,MR2393206,MR1846194,MR2172196, MR2515724, MR4032666, MR2813210,MR3019437, kuniya2018stability  }). In \cite{RN51} an epidemiological model which study the impact of decline in population on the dynamics of  infectious diseases especially childhood diseases is considered and also an example of measles in Italy is considered  and \cite{RN52} studied the SARS outbreak in Taiwan, using the data of daily reported cases from May 5 to June 4, 2003 to study the spread of virus.
H. Inaba \cite{MR1057046} discussed threshold and stability results for an age structured SIR model,  Andrea \cite{MR2393206} generalized the work of \cite{MR1057046} and also considered immigration of infective in all epidemiological compartments. We considered an age structured SIR model in which individuals can also get infected due to contaminated surfaces. We also assume that the net reproduction rate of the host population is unity which also makes our model different from the model considered in \cite{MR2393206}. \\
\\
Our work is divided into four sections.  In section 2, we formulate  our age structured SIR model. In section 3, we discuss the existence of solution to our model. In section 4, we discuss steady state solutions and show that there is no disease free steady state solution as long as there is transmission due to indirect contacts in the environment.

\mysection{Model Formulation}
Let $U(a,t)$ be the density of individuals of age $a$ at time $t$. $\mu(a)$ and $\beta(a)$ be age dependent mortality and fertility rates respectively. Let $a_{m}$ be the maximum age which an individual can attain i.e. the maximum life span of an individual. Then the evolution of $U(a,t)$ can be modeled by the following McKendrick-Von Foerster PDE with initial and boundary conditions:

\begin{equation} \label{2.1}
\begin{cases}
  \frac{\partial U(a,t)}{\partial t} + \frac{\partial U(a,t) }{\partial a}=  - \mu (a)U(a,t) \quad (a,t) \in (0,a_{m}) \times (0,\infty)   \\
 U(0,t)= \int_{0}^{a_{m}}  \beta(a)U(a,t)da \quad t \in (0, \infty) \\
 U(a,0)=U_{0}(a) \quad a \in (0,a_{m}),\\
\end{cases}
\end{equation}
where $U(0,t)$ denotes the number of newborns per unit time at time $t$. We suppose that the mortality rate $ \mu \in L_{loc}^{1}([0,a_{m}))$ with the condition $\int_{0}^{a_{m}} \mu(a) da= + \infty$ and the fertility rate $\beta \in L^{\infty}(0,a_{m}).$   $e^{-\int_{0}^{a} \mu(s) ds}$ indicates  the proportion of individuals who are still living at age $a$ and $\int_{0}^{a_{m}} \beta(a)e^{-\int_{0}^{a} \mu(s) ds} da$ represents the net reproduction rate. Let us assume that the net reproduction rate is $1$. So, steady state solution is given by $U(a,t)=U(a)=\beta_{0} e^{-\int_{0}^{a} \mu(\tau) d \tau}$, where $\beta_{0}$ is given by $$ \beta_{0}= \frac{\int_{0}^{a_{m}} U_{0}(a) da}{\int_{0}^{a_{m}} e^{-\int_{0}^{a}\mu(\tau) d \tau} da}. $$
Let $S(a,t), I(a,t)$ and $R(a,t)$ be the densities of susceptible, infective and recovered individuals of age $a$ at time $t$. $r(a,b)$ is the age dependent transmission coefficient which describes the contact process between susceptible and infective individuals i.e. $r(a,b)S(a,t)I(b,t) da db$ is the number of individuals who are susceptibles with age lies between $a$ and $a+da$ and contract the disease after contact with an infective individual aged between $b$ and $b+db$. We assume the form of force of infection is given in the following functional form $$ \lambda(a,t) = \int_{0}^{a_{m}} r(a,\eta) I(\eta,t) d \eta. $$ Then the disease spread according to the following system of partial differential equations
\begin{equation} \label{2.2}
\begin{cases}
  \frac{\partial S(a,t)}{\partial t} + \frac{\partial S(a,t) }{\partial a}= -\lambda(a,t)S(a,t)-c(a)S(a,t)  - \mu (a)S(a,t)    \\
  \frac{\partial I(a,t)}{\partial t} + \frac{\partial I(a,t) }{\partial a}= \lambda(a,t)S(a,t)+c(a)S(a,t)-b(a) I(a,t)  - \mu (a)I(a,t)
  \\
  \frac{\partial R(a,t)}{\partial t} + \frac{\partial R(a,t) }{\partial a}= b(a)I(a,t)  - \mu (a)R(a,t)
  \\
S(0,t)= \int_{0}^{a_{m}}  \beta(a)(S(a,t)+I(a,t)+R(a,t))da, ~~ I(0,t)=0,~~ R(0,t)=0 \\
 S(a,0)=S_{0}(a),~~I(a,0)=I_{0}(a)~\text{and}~ R(a,0)=R_{0}(a).    \\

\end{cases}
\end{equation}
 $b(a)$ is the recovery rate of individuals and $c(a)$ is the proportion of individuals which are infected due to surface contamination. This factor $c(a)$ depends on the proportion of frontline workers as they are susceptible to viral infection from indirect contacts even during lockdown situation (if lockdown is imposed). Here we are assuming that spread of disease already started and fomites are present in the environment even if transmission coefficient $r(a,b)$ is zero.  Assume that $b,c \in L^{\infty}(0,a_{m})$ and $r \in L^{\infty}((0,a_{m}) \times (0,a_{m}))$ and also assume that all are non negative. \cite{RN49} studied how respiratory and viral disease spread in the presence of fomites. It is observed that enveloped respiratory viruses remain viable for less time and the nonenveloped enteric viruses remain viable for longer time. They calculated the inactivations coefficients of various respiratory viruses. Fig. 5 shows the respiratory virus inactivation rates $(K_{j})$. In Fig. 5, we have used the short forms flu and cov for influenza and  coronavirus respectively.  \\
 \includegraphics[width=17cm,height=9.5cm]{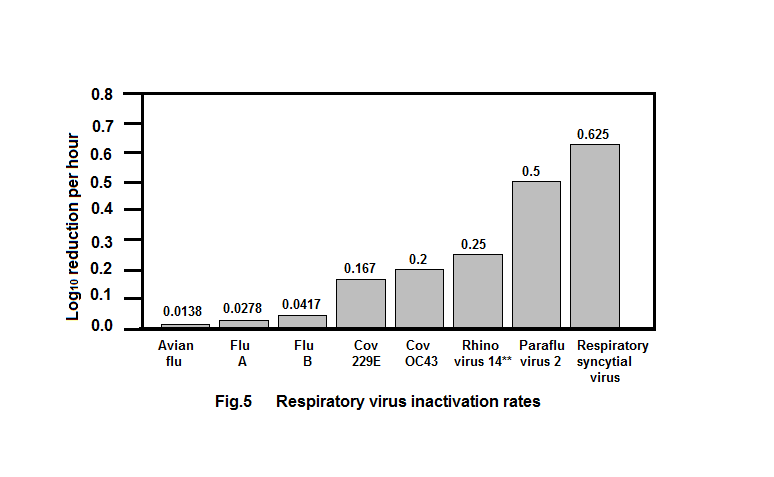}

 We impose the following conditions on our model:
 \\
 i)Although there may be incubation period for some diseases but here we are assuming that there is no incubation period and the individuals become infected instantaneously after contact with infected individuals or fomites. \\
 ii) We assume that the age zero individuals can not be infected. \\
 iii) Transmission coefficient $r(a,b)$ only summarizes the contact process between susceptible and infected individuals. \\
 iv) Population is in stationary demographic state. \\
 v) The susceptible individuals who got infected due to contact with infected individuals have not infected due to the contact with fomites and vice versa.
  \\
 Let $\overline{S}(a,t), \overline{I}(a,t)$ and $\overline{R}(a,t)$ be defined in the following way
 $$ \overline{S}(a,t)=\frac{S(a,t)}{U(a,t)}, \overline{I}(a,t)=\frac{I(a,t)}{U(a,t)}~ \text{and}~ \overline{R}(a,t)=\frac{R(a,t)}{U(a,t)} $$
 and the force of infection is given by
 $$ \lambda(a,t) = \int_{0}^{a_{m}} r(a,\eta) U(\eta) \overline{I}(\eta,t) d \eta. $$
 Then our new system becomes
 \begin{equation} \label{2.3}
\begin{cases}
  \frac{\partial \overline{S}(a,t)}{\partial t} + \frac{\partial  \overline{S}(a,t) }{\partial a}= -\lambda(a,t)\overline{S}(a,t)-c(a)\overline{S}(a,t)    \\
  \frac{\partial \overline{I}(a,t)}{\partial t} + \frac{\partial \overline{I}(a,t) }{\partial a}= \lambda(a,t)\overline{S}(a,t)+c(a)\overline{S}(a,t)-b(a) \overline{I}(a,t)
  \\
  \frac{\partial \overline{R}(a,t)}{\partial t} + \frac{\partial \overline{R}(a,t) }{\partial a}= b(a)\overline{I}(a,t)
  \\
  \overline{S}(0,t)= 1, ~~ \overline{I}(0,t)=0,~~ \overline{R}(0,t)=0 \\
 \overline{S}(a,0)=\overline{S}_{0}(a),~~\overline{I}(a,0)=\overline{I}_{0}(a)~\text{and}~ \overline{R}(a,0)=\overline{R}_{=}(a)
  \\
 \overline{S}(a,t)+\overline{I}(a,t)+\overline{R}(a,t)=1. \\
\end{cases}
\end{equation}
 So, new transformations reduced our system into a simpler form i.e. boundary conditions now become constant and there is no term involving natural mortality rate.

\mysection{Existence of solution}
If we observe system (\ref{2.3}) carefully, then it is clear that once susceptible and infected individuals are known, recovered individuals can be obtained easily so, it is enough to show the existence of solution to the below SI system instead of full SIR system
\begin{equation} \label{3.1}
\begin{cases}
  \frac{\partial \overline{S}(a,t)}{\partial t} + \frac{\partial  \overline{S}(a,t) }{\partial a}= -\lambda(a,t)\overline{S}(a,t)-c(a)\overline{S}(a,t)    \\
  \frac{\partial \overline{I}(a,t)}{\partial t} + \frac{\partial \overline{I}(a,t) }{\partial a}= \lambda(a,t)\overline{S}(a,t)+c(a)\overline{S}(a,t)-b(a) \overline{I}(a,t)
  \\
  \overline{S}(0,t)= 1, ~~ \overline{I}(0,t)=0.
  \\

\end{cases}
\end{equation}
 We will analyze the system (\ref{3.1}) only, because force of infection does not explicitly depend on recovered individuals.
Let $\tilde{S}(a,t)=\overline{S}(a,t)-1, \tilde{I}(a,t)=\overline{I}(a,t),$
then the system (\ref{3.1}) reduces to
 \begin{equation} \label{3.2}
\begin{cases}
  \frac{\partial \tilde{S}(a,t)}{\partial t} + \frac{\partial  \tilde{S}(a,t) }{\partial a}= -\lambda(a,t)(1+\tilde{S}(a,t))-c(a)(1+\tilde{S}(a,t))    \\
  \frac{\partial \tilde{I}(a,t)}{\partial t} + \frac{\partial \tilde{I}(a,t) }{\partial a}= \lambda(a,t)(1+\tilde{S}(a,t))+c(a)(1+\tilde{S}(a,t))-b(a) \tilde{I}(a,t)
  \\
  \tilde{S}(0,t)= 0, ~~ \tilde{I}(0,t)=0.
  \\
  \lambda(a,t) = \int_{0}^{a_{m}} r(a,\eta) U(\eta) \tilde{I}(\eta,t) d \eta. \\
\end{cases}
\end{equation}
Let $X=L^{1}(0,a_{m} ; \mathbb{C}^{2})$ equipped with the $L^{1}$ norm and  linear operator $\mathcal{A}$ be defined as
$$ (\mathcal{A} \xi)(a)=(-\frac{d}{d a} \xi_{1}(a)-c(a) \xi_{1}(a),-\frac{d}{da} \xi_{2}(a)-b(a) \xi_{2}(a)-c(a) \xi_{1}(a))  $$
$$ \text{where}~ \xi = (\xi_{1}(a),\xi_{2}(a)) \in D(\mathcal{A})  $$
$$ D(\mathcal{A}) = \{ \xi = (\xi_{1},\xi_{2}) \in X~|~ \xi_{1},\xi_{2} \in AC[0,a_{m}], \xi(0)=(0,0)  \} $$
$AC[0,a_{m}]$ is the set of absolutely continuous functions. \\
Suppose that $r(a,b) \in L^{\infty}((0,a_{m}) \times (0,a_{m}))$ and $$ (F \xi)(a)=(-(P \xi_{2})(a)(1+\xi_{1}(a))-c(a),(P \xi_{2})(a)(1+\xi_{1}(a))+c(a)), ~ \xi \in X,$$
where bounded linear operator $P$ is defined by
$$ (P \psi)(a) = \int_{0}^{a_{m}} r(a,\eta)U(\eta) \psi(\eta) d \eta, ~ \psi \in L^{1}(0,a_{m}). $$
Now, system (\ref{3.2}) can be written as an abstract semilinear Cauchy problem in Banach space $X$

$$ \frac{d}{dt}Z(t)= \mathcal{A}(t)Z(t)+F(Z(t)), ~ Z(0)=Z_{0} \in Z $$
$$ \text{where}~ Z(t) = (\tilde{S}(\cdot,t), \tilde{I}(\cdot,t)) \in Z, ~ Z_{0}(a)=(\tilde{S}_{0}(a),\tilde{I}_{0}(a)) $$
In the same manner as proved in \cite{MR1057046}, we can prove that $\mathcal{A}$ generates a $C_{0}$ semigroup $S(t), t \ge 0$ and $F$ is continuously Fr\'echet differentiable on $X$. \\
So, for each $Z_{0} \in X$, there exists a maximal interval of existence $[0,t_{0})$ and a unique solution $t \longrightarrow Z(t;Z_{0})$ which is continuous from $[0,t_{0})$ to $X$ such that
$$ Z(t,Z_{0})=S(t)Z_{0}+\int_{0}^{t} S(t-\sigma) F(Z(\sigma;Z_{0})) d \sigma ~~ \forall~ t \in [0,t_{0}]. $$
Moreover, if $Z_{0} \in D(\mathcal{A}),$ then $Z(t;Z_{0}) \in D(\mathcal{A})$ for $0 \le t < t_{0}$ and $t \longrightarrow Z(t;Z_{0})$ is continuously  differentiable and satisfies (\ref{3.2}) on $[0,t_{0}).$

\mysection{Steady state solutions}
\begin{equation} \label{4.1}
\begin{cases}
   \frac{d  \overline{S}(a) }{da}= -\lambda(a)\overline{S}(a)-c(a)\overline{S}(a)    \\
   \frac{d \overline{I}(a,t) }{da}= \lambda(a)\overline{S}(a)+c(a)\overline{S}(a)-b(a) \overline{I}(a)
  \\
  \overline{S}(0)= 1, ~~ \overline{I}(0)=0
  \\

\end{cases}
\end{equation}
with $ \lambda(a) = \int_{0}^{a_{m}} r(a,\eta) U(\eta)\overline{I}(\eta) d \eta $. \\
Steady state solution can be obtained as
$$\overline{S}(a) = \exp \left( -\int_{0}^{a} (\lambda(\sigma)+c(\sigma) d \sigma \right)  $$
$$ \overline{I}(a) = \int_{0}^{a} \exp \left ( -\int_{\sigma}^{a} b(\eta) d \eta \right) (\lambda(\sigma)+c(\sigma)) \exp \left( -\int_{0}^{\sigma} (\lambda(\eta)+c(\eta)) d \eta \right) d \sigma. $$ 
The force of infection depends on number of infected individuals and number of infected individuals also depend on the proportion of individuals getting infected due to indirect contacts (steady state solution shows this). So, force of infection will automatically take care of fomites present in the environment.  \\
The force of infection is given by
\begin{eqnarray}
 \label{4.2}  \lambda(a) &=& \int_{0}^{a_{m}} r(a,\zeta) U(\zeta)\overline{I}(\zeta) d \zeta
 \\  &=& \int_{0}^{a_{m}} r(a,\zeta)U(\zeta) \int_{0}^{\zeta} \exp \left( -\int_{\sigma}^{\zeta} b(\eta) d \eta \right) (\lambda(\sigma)+c(\sigma)) \exp \left( -\int_{0}^{\sigma}(\lambda(\eta)+c(\eta)) d \eta \right) d \sigma d \zeta. \nonumber \\
  \label{4.3} &=& \int_{0}^{a_{m}} \phi(a,\sigma) (\lambda(\sigma)+c(\sigma)) \exp \left( -\int_{0}^{\sigma}(\lambda(\eta)+c(\eta)) d \eta \right) d \sigma
 \\ &&
 \text{where}~ \label{4.4} \phi(a,\sigma) = \int_{\sigma}^{a_{m}} r(a,\zeta)U(\zeta) \exp \left( -\int_{\sigma}^{\zeta} b(\eta) d \eta \right) d \zeta
 \end{eqnarray}
 Using (\ref{4.2}), we can get the following estimate
 $$ | \lambda(a) | \le U \| r \|_{\infty} \| \overline{I} \|_{1} $$
 where $\| \|_{\infty}$ and $\| \|_{1}$ are the  $L^{\infty}$ and  $L^{1}$ norms respectively and $U$ is the total population.
 $$ \text{Therefore,}~ \lambda \in L^{\infty}(0,a_{m}).  $$
 It is clear that there is no disease free equilibrium as long as there is transmission due to fomites in the environment. That means if there are fomites present in the environment contaminated with pathogenic micro-organisms, disease still can spread without direct contact between susceptible and infected individuals.  \\
 On Banach space $E=L^{1}(0,a_{m})$, with positive cone $E_{+}= \{ \psi \in E ~|~ \psi \ge 0 ~ a.e. \},$ let us define
\begin{eqnarray} \label{4.5}
  \Phi(\psi)(a) = \int_{0}^{a_{m}} \phi(a,\sigma) (\psi(\sigma)+c(\sigma)) \exp \left( -\int_{0}^{\sigma}(\psi(\eta)+c(\eta)) d \eta \right) d \sigma
  \end{eqnarray}
 Suppose that we have the following assumptions
 \\
 \begin{itemize}
\item[(A1)] $r(\cdot,\cdot)$ satisfies $ \lim_{h \longrightarrow 0} \int_{0}^{a_{m}} \| r(a+h,s) - r(a,s) \| da = 0 $ uniformly for $s \in \mathbb{R}$ with $r(\cdot,\cdot)$ extended by defining $ r(a,s) = 0$ for a.e. $a,s \in (-\infty, 0) \cup (a_{m}, \infty).$ \\
 \item[(A2)] There exist $m >0, 0 < \alpha < a_{m}$ such that $r(a,b) \ge m$ for a.e. $(a,b) \in (0,a_{m}) \times (a_{m}-\alpha, a_{m})$.\\
 \item[(A3)] There exist $a_{1},a_{2}$ satisfying $0 \le a_{1} < a_{2} \le a_{m}$ such that $ c(a) >0 $ a.e. $a \in (a_{1},a_{2})$.
 \end{itemize}
 Observe that
\begin{eqnarray} \label{4.6n}
  \Phi(0)(a) = \int_{0}^{a_{m}} \phi(a,\sigma) \exp \left( - \int_{0}^{\sigma}c(\eta) d \eta \right) c(\sigma) d \sigma .
  \end{eqnarray}
 Since force of infection is non negative, we have
 $$ \lambda(a) \ge \Phi(0)(a) ~~ \text{a.e.}~ a \in (0,a_{m})                                                                                                          $$
 and because of assumption (A2), we have $ \Phi(0)(a) >0 $.
 Now, we will prove an important theorem which will help us to show the existence of fixed point to  (\ref{4.5}).
 \begin{Theorem} \label{thm 4.1}
 Let $\mathcal{D} = \{ \psi \in L_{+}^{1}(0,a_{m})~|~ \| \psi \|_{1} \le M,~\text{M is a positive constant} \} $ and suppose that the assumptions (A1)-(A3) hold, then \\
 \begin{itemize}
 \item[(a)] $\mathcal{D}$ is bounded, closed, convex and also $\Phi(\mathcal{D}) \subseteq \mathcal{D}$.
 \item[(b)] $\Phi$ is completely continuous . \\
Hence, Schauder's principle gives existence of fixed point of (\ref{4.5}).
  \end{itemize}
  \end{Theorem}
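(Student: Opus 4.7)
The plan is to apply Schauder's fixed-point theorem to $\Phi$ on $\mathcal{D}$, so the proof splits into the structural claim (a) and the functional claim (b). The set $\mathcal{D}$ is the intersection of the closed ball of radius $M$ in $E = L^1(0,a_m)$ with the positive cone $E_+$, both of which are closed and convex, so $\mathcal{D}$ is bounded, closed, and convex with no further work. The invariance $\Phi(\mathcal{D}) \subseteq \mathcal{D}$ rests on the key identity
\[
\int_0^{a_m}(\psi(\sigma)+c(\sigma))\exp\!\left(-\int_0^\sigma (\psi(\eta)+c(\eta))\,d\eta\right)d\sigma \;=\; 1 - \exp\!\left(-\int_0^{a_m}(\psi+c)\,d\eta\right) \;\le\; 1,
\]
which holds because the integrand equals $-\tfrac{d}{d\sigma}\exp(-\int_0^\sigma(\psi+c))$. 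Combining this with the crude bound $\phi(a,\sigma) \le \|r\|_\infty \int_0^{a_m} U(\zeta)\,d\zeta$ read off from (\ref{4.4}) (using $b \ge 0$) and swapping the order of integration in (\ref{4.5}), I get $\|\Phi(\psi)\|_1 \le a_m \|r\|_\infty \int_0^{a_m}U(\zeta)d\zeta$ for every $\psi \in E_+$. Choosing $M$ to equal (or exceed) this constant secures the inclusion, and positivity of $\Phi(\psi)$ is evident from the nonnegativity of $r$, $U$, $\psi$, $c$.

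For part (b), continuity is the easier half: given $\psi_n \to \psi$ in $L^1$, pass to an a.e.-convergent subsequence; the inner exponential $\exp(-\int_0^\sigma(\psi_n+c))$ is uniformly bounded by $1$, the outer integrand is dominated by $\|\phi\|_\infty(\psi_n(\sigma)+c(\sigma))$ which is $L^1$-convergent, and a double application of dominated convergence (first in $\sigma$ to get pointwise convergence in $a$, then in $a$) combined with the standard subsequence trick yields $\Phi(\psi_n) \to \Phi(\psi)$ in $L^1$. Relative compactness of $\Phi(\mathcal{D})$ in $L^1(0,a_m)$ will be established via the Riesz--Fr\'echet--Kolmogorov criterion. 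Boundedness has already been obtained, and since the base interval is bounded the tail condition is vacuous, so the only substantive task is the equicontinuity of translations
\[
\lim_{h \to 0}\; \sup_{\psi \in \mathcal{D}} \int_0^{a_m}\bigl|\Phi(\psi)(a+h)-\Phi(\psi)(a)\bigr|\,da \;=\; 0,
\]
with $\Phi(\psi)$ extended by zero outside $(0,a_m)$. Writing the difference through (\ref{4.4}) as a triple integral against $r(a+h,\zeta)-r(a,\zeta)$, bounding the $\psi$-dependent factor by $1$ via the identity above, and applying Fubini reduces the estimate to
\[
\sup_\zeta \int_0^{a_m}|r(a+h,\zeta)-r(a,\zeta)|\,da \longrightarrow 0 \quad \text{as } h \to 0,
\]
which is precisely assumption (A1).

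The main obstacle is this last equicontinuity step: the Fubini rearrangement has to be performed so that every occurrence of $\psi$ is absorbed into the $\le 1$ factor, leaving a bound that genuinely depends only on $r$, $U$, $b$ and the length $a_m$, so that (A1) can be applied uniformly over $\mathcal{D}$. Once equicontinuity is established, Riesz--Fr\'echet--Kolmogorov gives relative compactness of $\Phi(\mathcal{D})$, which together with continuity yields complete continuity of $\Phi$; Schauder's fixed-point theorem then produces the desired fixed point of (\ref{4.5}).
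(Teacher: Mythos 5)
Your proposal is correct, and it reaches the conclusion by a genuinely different route than the paper on the compactness step. For part (a) you and the paper both bound $\Phi(\psi)$ through $\|\phi\|_{\infty}$, but your use of the exact identity
\[
\int_0^{a_m}(\psi(\sigma)+c(\sigma))\exp\Bigl(-\int_0^{\sigma}(\psi(\eta)+c(\eta))\,d\eta\Bigr)\,d\sigma \;=\; 1-\exp\Bigl(-\int_0^{a_m}(\psi+c)\,d\eta\Bigr)\;\le\;1
\]
is cleaner and, importantly, makes explicit that the resulting bound $a_m\|r\|_{\infty}\|U\|_1$ is independent of $\psi$, so that the radius $M$ of $\mathcal{D}$ can be chosen once and for all to secure $\Phi(\mathcal{D})\subseteq\mathcal{D}$; the paper only asserts $\|\Phi(\psi)\|_1\le M$ for a ``generic constant,'' which obscures the fact that $M$ also defines $\mathcal{D}$. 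For continuity you use dominated convergence together with the a.e.-subsequence trick, whereas the paper attempts direct Lipschitz-type estimates on $\Phi(\varphi_1)-\Phi(\varphi_2)$; both routes work (state the domination carefully, since the dominating functions vary with $n$ — either pass to a further subsequence dominated by a fixed $L^1$ function or invoke the generalized dominated convergence theorem). For compactness the paper factors $\Phi=T_1\circ F_1+T_2\circ F_2$ with $T_1,T_2$ compact linear integral operators (compact by Riesz--Fr\'echet--Kolmogorov) and $F_1,F_2$ continuous bounded nonlinear maps, buying modularity: the RFK verification is done once on linear kernels. You instead apply Riesz--Fr\'echet--Kolmogorov directly to the image set $\Phi(\mathcal{D})$, absorbing every occurrence of $\psi$ into the factor bounded by $1$ and reducing the translation estimate, via Fubini, to $\sup_{\zeta}\int_0^{a_m}|r(a+h,\zeta)-r(a,\zeta)|\,da\to 0$, which is exactly (A1); this is more self-contained and makes the role of (A1) completely transparent. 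Both arguments are sound and ultimately rest on the same compactness criterion.
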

 \begin{proof}
Boundedness of set $\mathcal{D}$ is  clear and also for any $\psi_{1},\psi_{2} \in \mathcal{D}, 0 \le p \le 1$ we have $$p \psi_{1} + (1-p) \psi_{2} \in \mathcal{D}. $$ Closedness also follows from the definition of $\mathcal{D}$. 
Now, we will show that $ \Phi(\mathcal{D}) \subseteq \mathcal{D}$.
 \begin{dmath*}
 \Phi(\psi)(a) \le \| \phi \|_{\infty} \int_{0}^{a_{m}} (\psi(\sigma)+c(\sigma)) \exp \left( -\int_{0}^{\sigma}(\psi(\eta)+c(\eta)) d \eta \right) d \sigma 
 \le   M_{1} \| \phi \|_{\infty} \| c \|_{\infty} \int_{0}^{a_{m}} \exp \left( -\int_{0}^{a_{m}} \psi(\eta) d \eta \right) d \sigma +  M_{1} \| \phi \|_{\infty}\int_{0}^{a_{m}} \psi(\sigma) \exp \left( -\int_{0}^{a_{m}} \psi(\eta) d \eta \right) d \sigma 
 = M_{1} \| \phi \|_{\infty} \| c \|_{\infty} \int_{0}^{a_{m}} \exp \left( -\int_{0}^{a_{m}} \psi(\eta) d \eta \right) d \sigma+M_{1} \| \phi \|_{\infty} \left[ 1- \exp \left(- \int_{0}^{a_{m}} \psi(s) ds \right) \right]
 \end{dmath*} 
 where $M_{1}$ is an upper bound on $\exp \left( -\int_{0}^{\sigma} c(\eta) d \eta \right)$. Now , using the fact that $ | \psi |_{1} \le M$, we can easily prove that 
 $$| \Phi(\psi)(a) |_{1} \le M $$ for some generic constant $M$.
Now, 
\begin{dmath} \label{4.7}
(\Phi(\varphi_{1}))(a)-(\Phi(\varphi_{1}))(a) = \int_{0}^{a_{m}} \left[  \varphi_{1} e^{-\int_{0}^{\sigma} \varphi_{1}(\eta) d \eta} - \varphi_{2} e^{-\int_{0}^{\sigma} \varphi_{2}(\eta) d \eta} \right] \phi(a,\sigma) e^{-\int_{0}^{\sigma}c(\eta) d \eta} d \sigma + \int_{0}^{a_{m}} \left[   e^{-\int_{0}^{\sigma} \varphi_{1} (\eta) d \eta} -  e^{-\int_{0}^{\sigma} \varphi_{2}(\eta) d \eta} \right] \phi(a,\sigma) e^{-\int_{0}^{\sigma}c(\eta) d \eta} d \sigma.
\end{dmath}  
Let us firstly estimate the first integral as follows 
\begin{dmath*}
\left| \int_{0}^{a_{m}} \left[  \varphi_{1} e^{-\int_{0}^{\sigma} \varphi_{1}(\eta) d \eta} - \varphi_{2} e^{-\int_{0}^{\sigma} \varphi_{2}(\eta) d \eta} \right] \phi(a,\sigma) e^{-\int_{0}^{\sigma}c(\eta) d \eta} d \sigma \right|
\le \| \phi \|_{\infty} M_{1} \left[ e^{-\int_{0}^{a_{m}} \varphi_{2}(\eta) d \eta} -1 - e^{-\int_{0}^{a_{m}} \varphi_{1}(\eta) d \eta} +1 \right] 
 = M_{1} \| \phi \|_{\infty} \left( e^{-\| \varphi_{2} \|_{1}} - e^{- \| \varphi_{1} \|_{1}} \right) \le M_{1} \| \phi \|_{\infty} \| \varphi_{2} - \varphi_{1} \|_{1} \le M \| \varphi_{2} - \varphi_{1} \|_{1}
\end{dmath*}
where $M$ is generic constant. Similarly,
\begin{dmath*}
\left| \int_{0}^{a_{m}} \left[   e^{-\int_{0}^{\sigma} \varphi_{1} (\eta) d \eta} -  e^{-\int_{0}^{\sigma} \varphi_{2}(\eta) d \eta} \right] \phi(a,\sigma) e^{-\int_{0}^{\sigma}c(\eta) d \eta} d \sigma \right| 
\le M_{1} \|\phi \|_{\infty} \int_{0}^{a_{m}} \left[   e^{-\int_{0}^{a_{m}} \varphi_{1} (\eta) d \eta} e^{\int_{\sigma}^{a_{m}} \varphi_{1} (\eta) d \eta} - e^{-\int_{0}^{a_{m}} \varphi_{2} (\eta) d \eta} e^{\int_{\sigma}^{a_{m}} \varphi_{2} (\eta) d \eta} \right]  d \sigma
= M_{1} \| \phi \|_{\infty} \int_{0}^{a_{m}} \left[ e^{-\| \varphi_{1} \|_{1}}  e^{\int_{\sigma}^{a_{m}} \varphi_{1}(\eta) d \eta} -  e^{-\| \varphi_{2} \|_{1}}  e^{\int_{\sigma}^{a_{m}} \varphi_{2} (\eta) d \eta} \right]  d \sigma
\le M \left( e^{\| \varphi_{1} \|}-e^{\| \varphi_{2} \|} \right)
\le M \| \varphi_{2}-\varphi_{1} \|_{1}
\end{dmath*}
which proves the continuity of $\Phi$.\\
 Now we will prove that $\Phi$ is compact operator, so let us define
    $T_{1}, T_{2} : L_{+}^{1}(0,a_{m}) \longrightarrow L_{+}^{1}(0,a_{m})$ by
   \begin{eqnarray}
   & T_{1}(\psi)(a) = \int_{0}^{a_{m}} \psi(\sigma) k_{1}(a,\sigma) d \sigma \\
    & T_{2}(\psi)(a) = \int_{0}^{a_{m}} \psi(\sigma) k_{2}(a,\sigma) d \sigma
    \\
   & k_{1}(a,\sigma) = \phi(a,\sigma) \exp \left( -\int_{0}^{\sigma} c(\eta) d\eta \right),~ k_{2}(a,\sigma) = \phi(a,\sigma) c(\sigma) \exp \left( -\int_{0}^{\sigma} c(\eta) d\eta \right).
   \end{eqnarray}

   The operators $ T_{1} , T_{2} $ are linear, continuous and positive. By applying  Riesz-Fr\'echet-Kolmogorov theorem on compactness in $ L^{1} $, we can conclude that $ T_{1}, T_{2} $ are compact operators.
   Now, let us define nonlinear operators $F_{1}, F_{2} : L_{+}^{1}(0,a_{m}) \longrightarrow L_{+}^{1}(0,a_{m})$ by
   \begin{eqnarray}
   & F_{1}(\psi)(\sigma)  = \psi(\sigma) \exp \left( -\int_{0}^{\sigma} \psi(\tau) d \tau \right) \\
    & F_{2}(\psi)(\sigma)  =  \exp \left( -\int_{0}^{\sigma} \psi(\tau) d \tau \right).
   \end{eqnarray}
   Here, $F_{1}, F_{2}$ are continuous and hence $T \circ F_{1},T \circ F_{2} $ are compact operators in $L_{+}^{1}(0,a_{m})$. \\
   Therefore, $\Phi = T \circ F_{1}+T \circ F_{2} $ is compact operator.
   Hence, Schauder's principle gives existence of fixed point of (\ref{4.5}).
  \end{proof}

   Let $T = \Phi^{'}(0)$ denote the Fr\'echet derivative of $\Phi$ at 0 i.e. \\
   $$ T(\psi)(a) = \int_{0}^{a_{m}} \phi(a,\sigma) \psi(\sigma) \exp \left( -\int_{0}^{\sigma}c(\eta) d \eta \right) d \sigma  ~~ \text{for}~~ a \in (0,a_{m}), \psi \in L^{1}(0,a_{m}).$$
 Clearly,  $T$ is a positive linear, continuous and also compact operator.
   Let us define
   \begin{eqnarray} \label{4.12}
   & T_{0}(\psi)(a) = \int_{0}^{a_{m}} \phi(a,\sigma) \psi(\sigma) d \sigma  \\
  \label{4.13}   &  T_{n}(\psi)(a) = \int_{0}^{a_{m}} \phi(a,\sigma) \psi(\sigma) \exp \left( -\int_{0}^{\sigma} c_{n}(\eta) d \eta  \right) d \sigma
   \end{eqnarray}
where $c_{n}$ is the sequence of the proportion of individuals infected due to indirect contacts.   The spectral radius $(\rho(T))$ of the operator $T$ plays an important role in deciding the nature of equilibrium solutions i.e. whether disease free equilibrium solution exists or not. In our case if there is a proportion of individuals who are infected due to fomites, disease free equilibrium point will not exists.
   Our aim is to prove the following theorem: \\
   \begin{Theorem} \label{thm 4.3}
   Let $T_{0}$ be as defined in (\ref{4.12}) and $\Phi_{n}$ be analogous to $\Phi$ in which $c$ is replaced by $c_{n}$. 
   \begin{itemize}
  \item[(a)] If spectral radius $\rho(T_{0}) \le 1 $, then the sequence $\{ \psi_{n} \}$ of fixed points of $\Phi_{n}$ converges to zero.
  \item[(b)] If spectral radius of $T_{0}$ is larger than $1$,  then $\exists \gamma >0 $ such that $\| \psi_{n} \| \ge \gamma ~ \forall n \in \mathbb{N}.$
  \end{itemize}
   \end{Theorem}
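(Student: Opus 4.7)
The strategy rests on three ingredients. First, Theorem~\ref{thm 4.1} gives compactness of each $\Phi_n$ on $\mathcal{D}$ together with a uniform $L^1$ bound on $\{\psi_n\}$, so the sequence is precompact in $L^1(0,a_m)$. Second, by (A2) the kernel $\phi(a,\sigma)$ of $T_0$ is strictly positive on a set of positive measure, so $T_0$ is a positive compact operator on $L^1(0,a_m)$, and Krein--Rutman supplies a nonnegative eigenfunction $u_0$ at $\rho(T_0)$ together with a nonnegative left eigenfunction $\ell_0\in L^\infty(0,a_m)$ which is strictly positive a.e.\ under the irreducibility of $T_0$. Third, interpreting $c_n\to 0$ in $L^1$, we have $T_n\to T_0$ in operator norm and $\Phi_n$ converges to $\Phi_\infty(\psi)=T_0(\psi\,e^{-\int_0^{\cdot}\psi})$ uniformly on bounded sets.

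For part~(a), the inequality $\exp(-x)\le 1$ applied inside $\Phi_n$ gives the pointwise sub-solution bound $\psi_n \le T_0\psi_n + T_0 c_n$. When $\rho(T_0)<1$ the Neumann series $(I-T_0)^{-1}=\sum_k T_0^k$ is a bounded positive operator, so $\psi_n \le (I-T_0)^{-1}T_0 c_n \to 0$ in $L^1$. For the boundary case $\rho(T_0)=1$, I extract a subsequence $\psi_{n_k}\to \psi^*$ in $L^1$ by compactness, pass to the limit to obtain $\psi^* = T_0(\psi^* e^{-\int_0^{\cdot} \psi^*(\eta)d\eta})$, and pair against $\ell_0$ (which at $\rho(T_0)=1$ satisfies $\ell_0 T_0 = \ell_0$) to get
\begin{equation*}
\int_0^{a_m} \ell_0(\sigma)\,\psi^*(\sigma)\bigl(1-e^{-\int_0^\sigma \psi^*(\eta)\,d\eta}\bigr)\,d\sigma = 0.
\end{equation*}
Since $\ell_0>0$ a.e.\ and the bracket is strictly positive wherever $\psi^*>0$, this forces $\psi^*=0$ a.e. Every subsequential limit vanishes, hence $\psi_n\to 0$.

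For part~(b), I use the complementary bound $\exp(-x)\ge 1-x$ to get $\psi_n \ge (1-\|\psi_n\|_1-\|c_n\|_1)(T_0\psi_n + T_0 c_n)$ pointwise, valid once the prefactor is nonnegative. Pairing against $\ell_0$ (now $\ell_0 T_0=\rho(T_0)\ell_0$ with $\rho(T_0)>1$) and writing $A_n=\rho(T_0)(1-\|\psi_n\|_1-\|c_n\|_1)$ yields
\begin{equation*}
\ell_0(\psi_n)(1-A_n) \ge A_n\,\ell_0(c_n).
\end{equation*}
Suppose for contradiction $\|\psi_{n_k}\|_1\to 0$ along a subsequence; since $c_{n_k}\to 0$ too, we have $A_{n_k}>1$ eventually, making the coefficient $(1-A_{n_k})$ of $\ell_0(\psi_{n_k})$ strictly negative while the right side is nonnegative. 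The only reconciliation with $\ell_0(\psi_{n_k})\ge 0$ is $\ell_0(\psi_{n_k})=0=\ell_0(c_{n_k})$, which by strict positivity of $\ell_0$ forces $\psi_{n_k}=c_{n_k}=0$ a.e.; but (A2)--(A3) ensure $\Phi_{n_k}(0)\not\equiv 0$, so $\psi_{n_k}\ne 0$, a contradiction. Therefore $\liminf_n \|\psi_n\|_1>0$, and combining with nontriviality of each individual $\psi_n$ gives the uniform bound $\gamma>0$.

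The main obstacle is the Krein--Rutman step: one must establish strict positivity of the left eigenfunction $\ell_0$ a.e.\ on $(0,a_m)$, which demands irreducibility of the kernel $\phi(a,\sigma)$ — extracted from (A2) together with the explicit form~(\ref{4.4}), perhaps after restricting to a sub-interval where $\phi$ is bounded below. A secondary delicate point is the edge case $\rho(T_0)=1$ in part~(a), where the Neumann-series sub-solution argument breaks down and one is genuinely forced into the compactness-and-limit route; the continuity of $\Phi_n$ jointly in $\psi$ and in $c_n$, which is used when passing to the limit in the fixed-point identity, follows from $\phi\in L^\infty$ and $\|c_n\|_1\to 0$.
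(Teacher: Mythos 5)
Your proposal is correct in substance but takes a genuinely different route from the paper in both parts. For part (a) the paper simply combines its Proposition (every convergent subsequence of $\{\psi_n\}$ limits to a fixed point of $\Phi_0$) with Inaba's result (Theorem \ref{thm 4.6}(a)) that $0$ is the unique fixed point of $\Phi_0$ when $\rho(T_0)\le 1$; you instead reprove that uniqueness, splitting into $\rho(T_0)<1$ (a Neumann-series super-solution bound $\psi_n\le (I-T_0)^{-1}T_0c_n$, which needs no compactness at all) and $\rho(T_0)=1$ (compactness plus pairing with the strictly positive left eigenfunction $\ell_0$). Your pairing identity does force $\psi^*=0$, though the phrase ``the bracket is strictly positive wherever $\psi^*>0$'' needs the small repair that $1-e^{-\int_0^\sigma\psi^*}$ vanishes for $\sigma$ below the essential infimum of $\operatorname{supp}\psi^*$, where $\psi^*$ is anyway zero a.e. For part (b) the paper pairs $\psi_n$ with the $n$-dependent adjoint eigenfunctions $f_n$ of $T_n^*$ (hence needs Lemma \ref{lemma 4.10} to get $\rho(T_n)\ge\rho(T_0)>1$) and the bound $e^{-\|\psi\|_1}T_n\psi\le\bar\Phi_n\psi$, obtaining the explicit constant $\gamma=\log\rho(T_0)$; you pair with the single fixed eigenfunction $\ell_0$ of $T_0^*$ and linearize via $e^{-x}\ge 1-x$, which bypasses Lemma \ref{lemma 4.10} and the family $\{f_n\}$ entirely but only yields a non-explicit $\gamma$ through a $\liminf$/contradiction argument plus the nontriviality of each $\psi_n$. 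Both arguments share the same implicit hypothesis, which you should state: each $c_n$ must be nonzero on a set of positive measure (an analogue of (A3) for $c_n$), since otherwise $\Phi_n(0)=0$ and $\psi_n=0$ is an admissible fixed point, defeating any uniform lower bound; the paper's strict inequality $\langle f_n,\bar\Phi_n\psi_n+u_n\rangle>\langle f_n,\bar\Phi_n\psi_n\rangle$ relies on exactly the same assumption. Finally, note that you read $c_n\to 0$ in $L^1$ while the paper assumes only a.e.\ convergence; with the implicit uniform $L^\infty$ bound on $\{c_n\}$ these coincide by dominated convergence, but it is worth saying so.
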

Our aim is also to prove that 
$$ \lim_{n \to \infty} \rho(T_{n}) = \rho(T_{0})$$ which gives dependence of force of infection on $c_{n}$.
   Before proving the above theorem, we will prove some lemmas and also state some theorems.

   \begin{Definition}
   Let $E_{+} \subset E$ be a cone in Banach space $E$, then the cone $E_{+}$ is called total if the following set
   $$ \{ f - g : f,g \in E_{+} \} $$ is dense in  Banach space $E$.
   \end{Definition}
   \begin{Theorem} \label{thm 4.5}
(Krein-Rutman (1948))   Let $E$ be a real Banach space and $E_{+}$ be total order cone in $E$. Let $\mathbb{A} : E \longrightarrow E$ be positive linear and compact operator w.r.t. $E_{+}$ and also $\rho(\mathbb{A}) >0$. Then $\rho(\mathbb{A})$ is an eigen value of $\mathbb{A}$ and $\mathbb{A}^{*}$ with eigen vectors in $E_{+}$ and $E_{+}^{*}$ respectively.
   \end{Theorem}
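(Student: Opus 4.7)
The plan is to produce a positive eigenvector at the spectral radius by analysing the Laurent expansion of the resolvent of $\mathbb{A}$ about $r:=\rho(\mathbb{A})$, and then to obtain the adjoint assertion by running the same argument on $\mathbb{A}^{*}$ with the dual cone. The two key inputs are positivity of the resolvent above $r$ and the fact, from Riesz--Schauder theory, that $r$ is an isolated pole of the resolvent.

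First I would exploit positivity. For $\lambda>r$ the Neumann series
\[
R(\lambda)=(\lambda I-\mathbb{A})^{-1}=\sum_{n=0}^{\infty}\lambda^{-n-1}\mathbb{A}^{n}
\]
converges in operator norm, and since $\mathbb{A}$ preserves $E_{+}$ and $E_{+}$ is closed, every partial sum, hence $R(\lambda)$ itself, sends $E_{+}$ into $E_{+}$ for each $\lambda>r$. On the other hand, because $\mathbb{A}$ is compact and $r>0$, classical Riesz--Schauder theory forces $r$ to be an isolated eigenvalue of finite algebraic multiplicity and a pole of $R(\lambda)$ of some order $k\ge 1$. Writing the Laurent expansion
\[
R(\lambda)=\sum_{n=-k}^{\infty}(\lambda-r)^{n}A_{n},\qquad A_{-k}\ne 0,
\]
and identifying the coefficient of $(\lambda-r)^{-k}$ in $(\lambda I-\mathbb{A})R(\lambda)=I$ yields $(rI-\mathbb{A})A_{-k}=0$, so the range of $A_{-k}$ consists entirely of eigenvectors of $\mathbb{A}$ with eigenvalue $r$.

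The decisive step is to extract a non-zero positive element from that range. For any $u_{0}\in E_{+}$ and $\lambda>r$, the vector $(\lambda-r)^{k}R(\lambda)u_{0}$ lies in $E_{+}$, and letting $\lambda\downarrow r$ closedness of $E_{+}$ gives $A_{-k}u_{0}\in E_{+}$. Since $A_{-k}$ is continuous and not the zero operator, if it vanished on every element of $E_{+}$ it would vanish on the linear span $E_{+}-E_{+}$ and therefore, by totality, on all of $E$, a contradiction. Hence there exists $u_{0}\in E_{+}$ with $y:=A_{-k}u_{0}\in E_{+}\setminus\{0\}$, and $\mathbb{A}y=ry$ by the coefficient identity above. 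This non-vanishing on the cone is precisely where totality of $E_{+}$ is indispensable and is the main obstacle of the proof; everything around it is formal manipulation of the Neumann and Laurent series.

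For the adjoint statement I would run the identical argument on $\mathbb{A}^{*}:E^{*}\to E^{*}$ with dual cone $E_{+}^{*}=\{\varphi\in E^{*}:\varphi(x)\ge 0\text{ for every }x\in E_{+}\}$. Schauder's theorem gives compactness of $\mathbb{A}^{*}$, one has $\rho(\mathbb{A}^{*})=\rho(\mathbb{A})=r>0$, and positivity of $\mathbb{A}^{*}$ with respect to $E_{+}^{*}$ is immediate. The only extra ingredient needed is enough richness of $E_{+}^{*}$ to support the non-vanishing-on-the-cone step, which follows from a Hahn--Banach separation argument using closedness and totality of $E_{+}$. Once this is in place, repeating the three steps produces $\varphi\in E_{+}^{*}\setminus\{0\}$ with $\mathbb{A}^{*}\varphi=r\varphi$, completing the theorem.
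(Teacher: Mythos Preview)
The paper does not prove this theorem; it is quoted as the classical Krein--Rutman result and used as a black box later on. So there is no proof in the paper to compare against, and your proposal has to be assessed on its own merits.

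Your overall strategy---positivity of the resolvent on $(r,\infty)$, Laurent expansion at $r$, extraction of a positive eigenvector from the leading coefficient $A_{-k}$ via totality of the cone, and then dualising---is indeed the standard modern route to Krein--Rutman, and the step from ``$A_{-k}$ is nonzero and maps $E_{+}$ to $E_{+}$'' to ``some $A_{-k}u_{0}\in E_{+}\setminus\{0\}$'' is handled correctly.

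There is, however, a genuine gap. You write that ``because $\mathbb{A}$ is compact and $r>0$, classical Riesz--Schauder theory forces $r$ to be an isolated eigenvalue \ldots\ and a pole of $R(\lambda)$.'' Riesz--Schauder only tells you that every nonzero point of $\sigma(\mathbb{A})$ is an isolated eigenvalue of finite multiplicity and a pole of the resolvent; it does \emph{not} tell you that the positive real number $r=\rho(\mathbb{A})$ lies in $\sigma(\mathbb{A})$ at all. For a general compact operator the spectral radius may be attained only at non-real eigenvalues (a rotation on $\mathbb{R}^{2}$, complexified, has $\rho=1$ but $1\notin\sigma$). That $r$ itself belongs to the spectrum is exactly where positivity must be used, typically via a Pringsheim-type argument: the Neumann series $\sum_{n\ge 0}\lambda^{-n-1}\mathbb{A}^{n}$ has positive operator coefficients, so along the positive real axis the boundary of convergence is a genuine singularity of $R(\lambda)$, whence $\|R(\lambda)\|\to\infty$ as $\lambda\downarrow r$ and $r\in\sigma(\mathbb{A})$. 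Only after this is established does Riesz--Schauder upgrade $r$ to a pole and your Laurent argument acquire content. You should insert this step---together with an explicit complexification, which you also need in order to speak of holomorphy and Laurent expansions in a real Banach space---before invoking Riesz--Schauder. Once that is fixed, the adjoint part goes through as you outline, since $\sigma(\mathbb{A}^{*})=\sigma(\mathbb{A})$ already contains $r$.
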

   In SIR model without fomites transmission coefficient $c$, Inaba \cite{MR1057046} proved the following results:
   \begin{Theorem} \label{thm 4.6}
  (\cite{MR1057046} Proposition 4.6) Let $T$ be the Fr\'echet derivative of $\Phi$ at $0$
   \begin{itemize}
   \item[(a)] If spectral radius $\rho(T) \le 1$, then there is a disease free fixed point $\psi=0$ to the operator $\Phi$.
   \item[(b)] If spectral radius $\rho(T) >1$, then there exist atleast one non zero fixed point of $\Phi$.
   \end{itemize}
   \end{Theorem}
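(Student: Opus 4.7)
The plan is to exploit, in the Inaba setting (i.e.\ $c\equiv 0$, so that $\Phi(\psi)(a)=\int_{0}^{a_m}\phi(a,\sigma)\psi(\sigma)e^{-\int_0^\sigma\psi(\eta)d\eta}d\sigma$ and $T$ coincides with $T_{0}$), the pointwise domination $\Phi(\psi)(a)\le T\psi(a)$ for every $\psi\in L^1_+(0,a_m)$, which follows immediately from $\exp(-\int_0^\sigma\psi)\le 1$, combined with the Krein--Rutman theorem (Theorem \ref{thm 4.5}) applied to the positive compact operator $T$.

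For part (a), observe that $\Phi(0)=0$, so $\psi=0$ is automatically a fixed point; the real assertion is that it is the \emph{only} nonnegative one when $\rho(T)\le 1$. Suppose to the contrary that $\psi\in L^1_+(0,a_m)\setminus\{0\}$ satisfies $\Phi(\psi)=\psi$. Assumption (A2) forces $\rho(T)>0$, so Theorem \ref{thm 4.5} supplies a nonzero eigenfunctional $\varphi^*\in E_+^{*}$ with $T^{*}\varphi^{*}=\rho(T)\varphi^{*}$, and (A1)--(A2) together yield its strict positivity on the cone. Pairing gives
\[
\langle\varphi^{*},\psi\rangle
=\langle\varphi^{*},\Phi(\psi)\rangle
\le\langle\varphi^{*},T\psi\rangle
=\rho(T)\,\langle\varphi^{*},\psi\rangle,
\]
where the first inequality is \emph{strict} because $T\psi-\Phi(\psi)$ is strictly positive on the set where $\int_0^\sigma\psi>0$, which has positive measure since $\psi\neq 0$, and $\varphi^{*}$ is strictly positive. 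Combined with $\rho(T)\le 1$ this produces $\langle\varphi^{*},\psi\rangle<\langle\varphi^{*},\psi\rangle$, a contradiction, so $\psi=0$.

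For part (b) I would use the fixed-point index for compact self-maps of the cone $E_+=L^1_+(0,a_m)$. By Theorem \ref{thm 4.1}, $\Phi$ is completely continuous and $\Phi(\mathcal{D})\subseteq\mathcal{D}$, so the index of $\Phi$ on the cone-ball $\mathcal{D}$ equals $1$. On the other hand, $\Phi$ is Fr\'echet differentiable at $0$ with $\Phi'(0)=T$ of spectral radius strictly greater than $1$, and a classical cone-index result (Krasnoselskii, Nussbaum, Amann) asserts that in this situation $0$ is an isolated fixed point of $\Phi$ in $E_+$ and its local index there equals $0$. By the additivity/excision property of the index, there must exist a fixed point of $\Phi$ in $\mathcal{D}\setminus\{0\}$, which is the required nontrivial endemic equilibrium.

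The main obstacle is establishing the strict positivity of the Krein--Rutman eigenfunctional $\varphi^{*}$ needed in (a): this requires an irreducibility-type argument for the kernel $\phi(a,\sigma)=\int_\sigma^{a_m}r(a,\zeta)U(\zeta)e^{-\int_\sigma^\zeta b}d\zeta$, and is precisely where the uniform lower bound on a rectangle from (A2) and the $L^1$-continuity in the first argument from (A1) are doing real work. Once this is in place, the subcritical case $\rho(T)<1$ of (a) is immediate from the pairing, and the boundary case $\rho(T)=1$ is handled by the strict inequality just noted. In (b), the only delicate point is checking that the cone-index criterion based on $\rho(\Phi'(0))>1$ is applicable in our $L^1$ setting, which follows from the compactness of $T$ already established in the discussion after Theorem \ref{thm 4.1}.
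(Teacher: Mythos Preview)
The paper does not actually prove Theorem \ref{thm 4.6}: it is quoted verbatim from Inaba \cite{MR1057046}, Proposition 4.6, as an external result, and is then invoked as a black box in the proof of Theorem \ref{thm 4.3}. So there is no in-paper argument to compare your proposal against.

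That said, your sketch is a faithful outline of how the Inaba result is proved. The domination $\Phi(\psi)\le T\psi$ (valid precisely because $c\equiv 0$, so $\Phi(0)=0$) and the pairing with a strictly positive dual Krein--Rutman eigenfunctional is the standard route to uniqueness in (a), and you correctly isolate the strict inequality needed to cover the borderline case $\rho(T)=1$. You are also right that strict positivity of $\varphi^{*}$ is the nontrivial ingredient; in the present paper this is exactly the content of Lemma \ref{lemma 4.9} (specialized to $c=0$), obtained via Schaefer's Theorem \ref{thm 4.7}, so you may simply cite that rather than reprove irreducibility. For (b), Inaba's original proof proceeds through Krasnosel'skii-type cone expansion/compression rather than the fixed-point index formulation you give, but the two arguments are equivalent in this setting and your version is perfectly acceptable; the compactness of $T$ and the invariance $\Phi(\mathcal{D})\subseteq\mathcal{D}$ from Theorem \ref{thm 4.1} supply exactly the hypotheses you need.
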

   \begin{Theorem} \label{thm 4.7}
  (\cite{MR0423039} Theorem V6.6) Let $ E= L^{p}(\mu) $, $p \in [1,\infty]$ and $(Z,\mathcal{S},\mu)$ be a $\sigma -$ finite measure space. Suppose
   $\mathbb{A} \in \mathcal{L}(E)$ is defined by
   $$ \mathbb{A} g(t) = \int \mathcal{K}(s,t)g(s) d \mu(s),~ g \in L^{p}(\mu),$$  non negative $\mathcal{K}$  is $\mathcal{S} \times \mathcal{S}$ measurable kernel which satisfy the following assumptions
   \begin{itemize}
   \item[(a)] Some power of $\mathbb{A}$ is compact.
   \item[(b)] $C \in \mathcal{S}$ and $\mu(C) > 0, \mu(Z \setminus C) >0 $ $$ \implies \int_{Z \setminus C} \int_{C} \mathcal{K}(s,t) d \mu(s) d \mu(t) >0.$$

\end{itemize}
Then $\rho(\mathbb{A}) > 0$ is an eigen value of $\mathbb{A}$ with a unique normalized eigen function $g$ satisfying $g(C) >0 ~ \mu$-a.e.  ; moreover if $\mathcal{K}(s,t) >0~ \mu \otimes \mu$-a.e, then every other eigen value $\lambda$ of $\mathbb{A}$ has the bound $| \lambda | < \rho(\mathbb{A}).$
   \end{Theorem}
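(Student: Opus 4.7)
The plan is to recognise the theorem as the Krein-Rutman/Jentzsch/Schaefer package for an irreducible positive operator on a Banach lattice, and to apply Theorem \ref{thm 4.5} to a compact iterate of $\mathbb{A}$. Throughout I would work on $E = L^p(\mu)$ with positive cone $E_+ = \{g : g \geq 0 \text{ a.e.}\}$; this cone is total via the decomposition $g = g^+ - g^-$, and non-negativity of $\mathcal{K}$ makes $\mathbb{A}$ positive with respect to $E_+$.

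The first step is to translate (b) into irreducibility. The closed lattice ideals of $L^p(\mu)$ are exactly the subspaces $J_C := \{g : g = 0 \text{ a.e.\ on } Z \setminus C\}$ indexed by $C \in \mathcal{S}$, and $J_C$ is $\mathbb{A}$-invariant precisely when $\mathcal{K} = 0$ a.e.\ on $C \times (Z \setminus C)$, i.e.\
\[
\int_{Z \setminus C} \int_C \mathcal{K}(s,t)\, d\mu(s)\, d\mu(t) = 0.
\]
Hypothesis (b) therefore says that no non-trivial closed lattice ideal is $\mathbb{A}$-invariant, i.e.\ $\mathbb{A}$ is irreducible. By (a), some iterate $\mathbb{A}^N$ is compact and positive, with $\rho(\mathbb{A}^N) = \rho(\mathbb{A})^N$. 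To apply Krein-Rutman I first need $\rho(\mathbb{A}) > 0$; supposing the contrary, $\mathbb{A}^N$ is quasi-nilpotent, the Neumann series $\sum_{k \geq 0} \mathbb{A}^{Nk} f$ converges for every $f \in E_+$, and the closed ideal generated by its range is proper and $\mathbb{A}$-invariant, contradicting irreducibility. Theorem \ref{thm 4.5} then supplies a non-zero $g \in E_+$ with $\mathbb{A}^N g = \rho(\mathbb{A})^N g$, and averaging $g, \mathbb{A}g, \ldots, \mathbb{A}^{N-1}g$ against appropriate roots of unity on the finite-dimensional peripheral eigenspace (or a direct resolvent argument at $\rho(\mathbb{A})$) yields a non-negative eigenfunction of $\mathbb{A}$ itself with eigenvalue $\rho(\mathbb{A})$; the dual assertion for $\mathbb{A}^*$ is identical.

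Strict positivity of $g$ a.e.\ on $C$ would then come from irreducibility, since the zero set of $g$ would otherwise cut out a non-trivial invariant ideal. Uniqueness of the normalised eigenfunction is the standard ``difference'' argument: if $g_1, g_2$ were linearly independent normalised positive eigenfunctions, one could choose $\alpha > 0$ so that $g_1 - \alpha g_2$ is non-negative, non-zero, and vanishes on a set of positive measure, contradicting the strict positivity just established. For the final clause, assume $\mathcal{K} > 0$ $\mu \otimes \mu$-a.e.\ and let $\mathbb{A}h = \lambda h$ with $|\lambda| = \rho(\mathbb{A})$. The pointwise triangle inequality gives $\mathbb{A}|h| \geq |\lambda||h|$; maximality of $\rho(\mathbb{A})$ forces equality, so $|h|$ is a positive eigenfunction and by uniqueness $|h| = c g$, after which strict positivity of $\mathcal{K}$ rules out any non-trivial argument variation in $h$ and yields $\lambda = \rho(\mathbb{A})$. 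The chief obstacle I anticipate is establishing $\rho(\mathbb{A}) > 0$ from irreducibility alone; once this is in hand, the remaining Perron-type conclusions are classical in the theory of positive operators on Banach lattices.
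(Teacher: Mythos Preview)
The paper does not prove this statement at all: Theorem~\ref{thm 4.7} is quoted verbatim from Schaefer's monograph \cite{MR0423039} (Theorem~V.6.6) and is used purely as a black box in the proof of Lemma~\ref{lemma 4.9}. There is therefore no ``paper's own proof'' to compare your attempt against.

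That said, your sketch is broadly the right route to the cited result and is essentially how Schaefer organises the argument: identify condition~(b) with ideal-irreducibility of $\mathbb{A}$ on the Banach lattice $L^p(\mu)$, invoke Krein--Rutman on a compact iterate, and then run the Perron--Frobenius machinery (strict positivity of the eigenfunction via irreducibility, uniqueness by the difference trick, peripheral spectrum via $|\mathbb{A}h|\le \mathbb{A}|h|$). The one genuinely delicate point in your outline is the claim $\rho(\mathbb{A})>0$: your quasi-nilpotence argument (``the closed ideal generated by the range of the Neumann series is proper'') is not correct as stated, since that ideal need not be proper. The honest version is de~Pagter's theorem --- an irreducible positive compact operator on a Banach lattice of dimension $>1$ has strictly positive spectral radius --- and its proof requires more than what you wrote. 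If you want a self-contained argument here, that is the step to flesh out; everything else in your proposal is standard.
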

   Let $ \{ c_{n} \} $ be a sequence in $L_{+}^{\infty}(0,a_{m})$ such that $ c_{n}(a)\longrightarrow 0$ as $ n \longrightarrow \infty$ a.e. $a \in (0,a_{m})$ i.e. proportion of individuals who are susceptible to fomite infection are becoming less and $\Phi_{0}$ is defined as in $(\ref{4.5})$ with $c=0$.
   \begin{Proposition}
   There exist a converging subsequence $\{ \psi_{n_{k}} \}$ of $\{ \psi_{n} \}$ such that if $ \psi = \lim_{k \to \infty} \psi_{n_{k}} $, then $\psi$ is the fixed point of $\Phi_{0}$.

   \end{Proposition}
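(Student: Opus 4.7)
The plan is to use an $L^1$-compactness argument on $\{\psi_n\}$ and then pass to the limit in the fixed-point identity $\psi_n=\Phi_n(\psi_n)$. The three ingredients are a uniform $L^1$-bound on $\psi_n$, equicontinuity of $L^1$-translations so that Riesz-Fr\'echet-Kolmogorov applies, and dominated convergence in the terms involving $c_n$ coupled with the continuity of $\Phi_0$ already established inside Theorem \ref{thm 4.1}.

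First I would check that $\{\psi_n\}$ is bounded in $L^1_+(0,a_m)$. Since $c_n\to 0$ a.e., we may (passing to a subsequence if needed) assume $\sup_n\|c_n\|_\infty<\infty$. Inspection of the invariance estimate proved in Theorem \ref{thm 4.1} shows that the resulting constant $M$ only depends on $\|\phi\|_\infty$, $\sup_n\|c_n\|_\infty$ and $a_m$; since each $\psi_n$ is a fixed point of $\Phi_n$, we obtain $\|\psi_n\|_1\le M$ uniformly in $n$.

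Next I would bound the $L^1$-modulus of continuity. From $\psi_n=\Phi_n(\psi_n)$ and Fubini,
$$
\int_0^{a_m}|\psi_n(a+h)-\psi_n(a)|\,da \;\le\; \int_0^{a_m} g_n(\sigma)\,\omega_\phi(h,\sigma)\,d\sigma,
$$
where $g_n(\sigma)=(\psi_n(\sigma)+c_n(\sigma))\exp(-\int_0^\sigma(\psi_n+c_n)d\eta)$ has uniformly bounded $L^1$-norm and $\omega_\phi(h,\sigma)=\int_0^{a_m}|\phi(a+h,\sigma)-\phi(a,\sigma)|\,da$. Assumption (A1), together with definition (\ref{4.4}), yields $\omega_\phi(h,\sigma)\to 0$ as $h\to 0$ uniformly in $\sigma$. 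The Riesz-Fr\'echet-Kolmogorov theorem then extracts a subsequence $\psi_{n_k}\to\psi$ in $L^1_+(0,a_m)$. To finish, I decompose
$$
\Phi_{n_k}(\psi_{n_k})-\Phi_0(\psi)=[\Phi_{n_k}(\psi_{n_k})-\Phi_0(\psi_{n_k})]+[\Phi_0(\psi_{n_k})-\Phi_0(\psi)].
$$
The second bracket vanishes in $L^1$ by the continuity of $\Phi_0$ shown in Theorem \ref{thm 4.1}. For the first bracket, the difference of integrands
$(\psi_{n_k}+c_{n_k})e^{-\int_0^\sigma(\psi_{n_k}+c_{n_k})}-\psi_{n_k}e^{-\int_0^\sigma\psi_{n_k}}$ is pointwise dominated by $2\psi_{n_k}+\sup_n\|c_n\|_\infty$, and $c_{n_k}\to 0$ a.e.; multiplying by the bounded kernel $\phi(a,\sigma)\exp(-\int_0^\sigma c_{n_k})$ and integrating, dominated convergence gives $L^1$-convergence of this bracket to $0$. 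Hence $\psi_{n_k}=\Phi_{n_k}(\psi_{n_k})\to\Phi_0(\psi)$ in $L^1$; uniqueness of the $L^1$-limit yields $\psi=\Phi_0(\psi)$.

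The main obstacle is the translation estimate in the second step: one must verify that (A1) on $r$ propagates, uniformly in $\sigma$, to the corresponding mean-continuity of $\phi$ defined in (\ref{4.4}), despite the $\sigma$-dependent lower limit and the inner factor $\exp(-\int_\sigma^\zeta b)$. Once that is in place, Riesz-Fr\'echet-Kolmogorov, the previously established continuity of $\Phi_0$, and dominated convergence handle the remaining routine work.
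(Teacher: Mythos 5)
Your argument is correct and lands on the same final mechanism as the paper --- the splitting $\Phi_{n_k}(\psi_{n_k})-\Phi_0(\psi)=[\Phi_{n_k}(\psi_{n_k})-\Phi_0(\psi_{n_k})]+[\Phi_0(\psi_{n_k})-\Phi_0(\psi)]$, with the first bracket killed by $c_{n_k}\to0$ and the second by continuity of $\Phi_0$ --- but you obtain the compactness differently. The paper does not extract a convergent subsequence of $\{\psi_n\}$ directly: it applies the already-proved compactness of the operator $\Phi_0$ (Theorem \ref{thm 4.1}) to the bounded sequence $\{\psi_n\}$ to get a convergent subsequence of the \emph{images} $\{\Phi_0(\psi_{n_k})\}$, and only then recovers $\psi_{n_k}\to\psi$ from the identity $\psi_{n_k}=\Phi_{n_k}(\psi_{n_k})=\Phi_0(\psi_{n_k})+[\Phi_{n_k}(\psi_{n_k})-\Phi_0(\psi_{n_k})]$. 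You instead re-derive a Riesz--Fr\'echet--Kolmogorov translation estimate for $\psi_n$ itself through the fixed-point equation, which forces you to verify that (A1) propagates from $r$ to $\phi$ uniformly in $\sigma$; that verification does go through (Fubini gives $\omega_\phi(h,\sigma)\le \|U\|_{1}\sup_{\zeta}\int_0^{a_m}|r(a+h,\zeta)-r(a,\zeta)|\,da$, since $\exp(-\int_\sigma^\zeta b)\le1$), so your route works, at the cost of redoing work the paper reuses as a black box. One caveat: your claim that a subsequence can be chosen with $\sup_n\|c_n\|_\infty<\infty$ does not follow from a.e.\ convergence alone. It is harmless for the uniform bound, which you in fact get for free from $\int_0^{a_m}(\psi+c_n)\exp\bigl(-\int_0^\sigma(\psi+c_n)\bigr)\,d\sigma\le1$, giving $\|\psi_n\|_1\le a_m\|\phi\|_\infty$ with no dependence on $c_n$; but in the final dominated-convergence step some uniform control of $c_n$ (e.g.\ $\|c_n\|_1\to0$, or a uniform $L^\infty$ bound taken as an implicit hypothesis) is genuinely needed --- a point the paper's own proof also glosses over when it passes from $c_n\to0$ a.e.\ to $\Phi_{n_k}(\psi_{n_k})-\Phi_0(\psi_{n_k})\to0$.
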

 \begin{proof} Because $\Phi_{0}$ is compact and $0 \le \| \psi_{n} \| \le M, ~ \exists$ a converging subsequence
 $\{  \Phi_{0} (\psi_{n_{k}}) \}$  and let $$ \psi = \lim_{k \to \infty} \Phi_{0} (\psi_{n_{k}}). $$
 Because

$$  \Phi_{n_{k}} (\psi_{n_{k}})-\Phi_{0} (\psi_{n_{k}})= \int_{0}^{a_{m}} \phi(a,\sigma) \psi(\sigma) \exp \left( -\int_{0}^{\sigma}c_{n_{k}}(\eta) d \eta \right) d \sigma - \int_{0}^{a_{m}} \phi(a,\sigma) \psi(\sigma)  d \sigma  $$
$$ = \int_{0}^{a_{m}} \phi(a,\sigma) \psi(\sigma) \left[ \exp \left( -\int_{0}^{\sigma}c_{n_{k}}(\eta) d \eta \right) -1 \right] d \sigma $$
$$\because \lim_{n \to \infty} c_{n}(a) =0 ~ a.e. ~ a \in (0,a_{m})  $$
$$ \text{we have} ~ \lim_{k \to \infty} \left[  \Phi_{n_{k}} (\psi_{n_{k}})-\Phi_{0} (\psi_{n_{k}}) \right] =0. $$
$$ \therefore \lim_{k \to \infty} \psi_{n_{k}} = \lim_{k \to \infty} \Phi_{n_{k}}(\psi_{n_{k}}) $$
$$ = \lim_{k \to \infty} \left[\Phi_{0} (\psi_{n_{k}}) + \Phi_{n_{k}} (\psi_{n_{k}})-\Phi_{0} (\psi_{n_{k}})  \right] = \psi. $$
Because $\Phi_{0}$ is continuous, we have
$$ \lim_{k \to \infty} \Phi_{0} (\psi_{n_{k}}) = \Phi_{0}(\psi) = \psi  $$
which proves that $$  \Phi_{0}(\psi) = \psi. $$

\end{proof}
\begin{Lemma} \label{lemma 4.9}
Suppose $T_{0}$ be as defined in (\ref{4.12}), then $\rho(T_{0})$ is an eigen value of both $T_{0}$ and $T_{0}^{*}$ with unique strictly positive normalized eigen vectors $\psi$ and $f$ respectively.
\end{Lemma}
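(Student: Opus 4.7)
My approach is to apply the Krein--Rutman theorem (Theorem \ref{thm 4.5}) together with Theorem \ref{thm 4.7} to the operator $T_{0}$ on $E = L^{1}(0,a_{m})$, whose positive cone $E_{+} = L^{1}_{+}(0,a_{m})$ is total. The four ingredients I need to verify are that $T_{0}$ is linear and positive (immediate from the nonnegativity of $\phi$), bounded (since $r, b, U \in L^{\infty}$), compact, and that $\rho(T_{0}) > 0$.

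Compactness of $T_{0}$ will follow from the Riesz--Fr\'echet--Kolmogorov criterion in $L^{1}$: the kernel $\phi(a,\sigma)$ is bounded, and assumption (A1) yields the required equicontinuity in the mean of translates of $\phi(\cdot,\sigma)$ in the first variable. This is precisely the pattern used for $T_{1}, T_{2}$ in the proof of Theorem \ref{thm 4.1}, and the same argument simultaneously verifies hypothesis (a) of Theorem \ref{thm 4.7}. Linearity, boundedness and positivity are routine from the integral representation.

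The key nontrivial step is showing that $\phi(a,\sigma) > 0$ for a.e.\ $(a,\sigma) \in (0,a_{m}) \times (0,a_{m})$. By (A2), $r(a,\zeta) \ge m$ for a.e.\ $(a,\zeta) \in (0,a_{m}) \times (a_{m} - \alpha, a_{m})$. If $\sigma \le a_{m} - \alpha$, the interval $(\sigma, a_{m})$ appearing in (\ref{4.4}) contains $(a_{m} - \alpha, a_{m})$, so
$$
\phi(a,\sigma) \;\ge\; m \int_{a_{m} - \alpha}^{a_{m}} U(\zeta) \exp\!\left(-\int_{\sigma}^{\zeta} b(\eta)\, d\eta\right) d\zeta \;>\; 0;
$$
if instead $\sigma \in (a_{m} - \alpha, a_{m})$, then $(\sigma, a_{m}) \subset (a_{m} - \alpha, a_{m})$ and the analogous bound on $(\sigma, a_{m})$ gives $\phi(a,\sigma) > 0$. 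From this a.e.\ strict positivity two consequences follow: the irreducibility condition (b) of Theorem \ref{thm 4.7} is automatic, and choosing $\delta > 0$ together with a measurable set $E \subset (0,a_{m})$ of positive measure on which $\phi \ge \delta$ on $E \times E$ yields the pointwise domination $T_{0} \chi_{E} \ge \delta |E|\, \chi_{E}$. Iterating this inequality gives $\|T_{0}^{n}\|^{1/n} \ge \delta |E|$, so $\rho(T_{0}) \ge \delta |E| > 0$.

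With all hypotheses assembled, Theorem \ref{thm 4.5} delivers that $\rho(T_{0})$ is an eigenvalue of both $T_{0}$ and $T_{0}^{*}$ with eigenvectors in the respective positive cones, and Theorem \ref{thm 4.7} applied to $T_{0}$ upgrades the $T_{0}$-eigenvector to a unique normalized $\psi$ that is strictly positive a.e. To obtain the analogous statement for $T_{0}^{*}$, I would apply Theorem \ref{thm 4.7} a second time either to $T_{0}^{*}$ on $L^{\infty}(0,a_{m})$ (compact by Schauder's theorem, with transposed kernel $\phi(\sigma,a)$ which is again a.e.\ strictly positive) or, equivalently, to the integral operator on $L^{1}(0,a_{m})$ with kernel $\phi(\sigma,a)$, identifying its unique strictly positive eigenfunction with $f$ via the $L^{1}$--$L^{\infty}$ duality pairing. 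The main obstacle I expect is the careful extraction of the strict positivity of $\phi$ on $(0,a_{m})^{2}$ from the localized assumption (A2); the remaining ingredients reduce to standard integral-operator arguments already present in Theorem \ref{thm 4.1}.
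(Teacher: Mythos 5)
Your proposal is correct and follows essentially the same route as the paper: compactness of $T_{0}$ via the Riesz--Fr\'echet--Kolmogorov criterion, a.e.\ strict positivity of the kernel $\phi$ extracted from (A2) (the paper phrases this as $\phi(a,\sigma)\ge g(\sigma)$ with $g>0$ on $[0,a_{m})$ and $g(a_{m})=0$), and then Theorem \ref{thm 4.7} together with Krein--Rutman. The only real difference is at the adjoint step: the paper obtains $\rho(T_{0})$ as a simple eigenvalue of $T_{0}^{*}$ from the coincidence of the nonzero spectra of a compact operator and its adjoint, and then bootstraps strict positivity directly via $\hat{f}=\rho(T_{0})^{-1}T_{0}^{*}\hat{f}\ge \rho(T_{0})^{-1}g(a)\int_{0}^{a_{m}}\hat{f}(\eta)\,d\eta>0$, whereas you re-apply Theorem \ref{thm 4.7} to the transposed kernel; both work, and your explicit verification that $\rho(T_{0})>0$ (via iterating $T_{0}\chi_{E}\ge\delta|E|\chi_{E}$) supplies a detail the paper leaves implicit.
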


\begin{proof}  We know that
$$ T_{0}(\psi)(a) = \int_{0}^{a_{m}} \phi(a,\eta) \psi(\eta) d \eta $$
and is a compact operator by Theorem \ref{thm 4.1}.  Comparing $T_{0}$ with $\mathbb{A}$, conditions of Theorem \ref{thm 4.7} are satisfied and therefore $\rho(T_{0}) >0$ is the only eigen value of $T_{0}$ with a unique normalized eigen vector $\psi \in L_{+}^{1}(0,a_{m})$, satisfying $\psi(a) > 0 ~ a.e. $ and every other eigen value $\lambda$ of $T_{0}$ satisfy $ | \lambda| < \rho(T_{0})$. Also $T_{0}$ and  $T_{0}^{*}$ both have same non zero eigen values with same multiplicities. Since $\rho(T_{0})$ is the only eigen value of $T_{0}$ with a unique normalized eigen vector $\psi$, $\rho(T_{0})$ is also an algebraically simple eigen value of $T_{0}^{*}$ with unique normalized eigen function $f$. Now our task is to prove that eigen function $f$ is strictly positive. Suppose function $\hat{f} \in L_{+}^{\infty} \setminus \{0\}$  representing the functional $f$ be defined as
 $$ \langle f,\psi \rangle = \int_{0}^{a_{m}} \hat{f}(\eta) \psi(\eta) d \eta ~~ \forall \psi \in L^{1}(0,a_{m}).  $$
 $$ \text{Now},~ T_{0}^{*}(\varphi)(a) = \int_{0}^{a_{m}} \varphi(\eta) \phi(\eta,a) d \eta ~~ \forall  \varphi \in L^{\infty}(0,a_{m}), $$
  there exist a function  $g : [0,a_{m}] \longrightarrow \mathbb{R}$ which is continuous and $g(a)>0 ~~ \forall a \in [0,a_{m})$ and vanishes at $a_{m}$ (because of assumption (A2)) such that  $\phi(\eta,a) \ge g(a) ~~ a.e.~ \eta ,a \in (0,a_{m}). $\\
 $$ \text{Then},~ \hat{f}(a) = \frac{1}{\rho(T_{0})} T_{0}^{*}(\hat{f})(a) \ge \frac{1}{\rho(T_{0})} g(a) \int_{0}^{a_{m}} \hat{f}(\eta) d \eta >0  $$
 So, $f$ is strictly positive as $\hat{f} \in L_{+}^{\infty}(0,a_{m}) \setminus \{0\}$.
 \end{proof}
 \begin{Lemma} \label{lemma 4.10}
 Let $T_{0}$ and $T_{n}$ be as defined in  (\ref{4.12}) and (\ref{4.13}) respectively. Then $$ \lim_{n \to \infty}\rho(T_{n}) = \rho(T_{0}) ~ \text{and}~ \rho(T_{n}) \ge \rho(T_{0})~ \forall n. $$

  \end{Lemma}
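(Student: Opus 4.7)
The plan factors each $T_n$ as $T_n = T_0 \circ M_{w_n}$, where $w_n(\sigma) := \exp\bigl(-\int_0^\sigma c_n(\eta)\,d\eta\bigr)$ and $M_{w_n}$ is multiplication by $w_n$, and then combines an operator-norm approximation with spectral perturbation theory for compact operators and a Krein-Rutman comparison of positive eigenfunctions. The limit assertion will follow from continuous dependence of the simple Perron eigenvalue, while the inequality will be pursued through a dual-pairing identity.

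First I would establish $T_n \to T_0$ in the norm of $\mathcal{L}(L^1(0,a_m))$. From $c_n \ge 0$ we have $0 \le w_n \le 1$, and from $c_n \to 0$ a.e.\ (together with the ambient $L^\infty$ control of $c_n$) the Dominated Convergence Theorem yields $\int_0^\sigma c_n \to 0$ pointwise, hence $w_n \to 1$ a.e. Using $\phi \in L^\infty$ (a consequence of $r \in L^\infty$ and the definition (4.4)) and the kernel estimate $|(T_n-T_0)\psi(a)| \le \|\phi\|_\infty \int_0^{a_m}(1-w_n(\sigma))|\psi(\sigma)|\,d\sigma$, a second application of DCT gives $\|T_n - T_0\|_{\mathcal{L}(L^1)} \to 0$.

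Next I would invoke Lemma 4.9 for both $T_0$ (with $c\equiv 0$) and each $T_n$: assumptions (A1)-(A3) are inherited by the kernel $\phi(a,\sigma) w_n(\sigma)$, since $w_n > 0$ everywhere and the lower bound on $\phi$ coming from (A2) survives multiplication by a positive factor. Hence $\rho(T_0) > 0$ is an algebraically simple, isolated eigenvalue of $T_0$ with strictly positive eigenvectors $\psi_0 \in L^1_+$ and $f_0 \in L^\infty_+$ of $T_0$ and $T_0^*$, and each $\rho(T_n)$ is the simple dominant eigenvalue of $T_n$ with strictly positive eigenfunction $\psi_n$. Spectral continuity for compact operators under norm convergence, implemented by a Riesz projection on a small circle around $\rho(T_0)$ disjoint from the rest of $\sigma(T_0)$, together with the Krein-Rutman uniqueness identifying the dominant eigenvalue, gives $\rho(T_n) \to \rho(T_0)$.

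For the inequality $\rho(T_n) \ge \rho(T_0)$, pairing $T_n \psi_n = \rho(T_n) \psi_n$ with $f_0$ and using the factorization gives the exact identity
\begin{equation*}
\rho(T_n) \langle f_0, \psi_n\rangle \;=\; \langle f_0, T_0(w_n \psi_n)\rangle \;=\; \rho(T_0)\,\langle f_0, w_n \psi_n\rangle,
\end{equation*}
so that $\rho(T_n) = \rho(T_0)\,\langle f_0, w_n \psi_n\rangle/\langle f_0,\psi_n\rangle$. Establishing the direction stated in the lemma therefore reduces to showing $\langle f_0, w_n \psi_n\rangle \ge \langle f_0, \psi_n\rangle$. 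This is the \emph{principal obstacle} of the proof, since the pointwise bound $w_n \le 1$ alone gives only the reverse estimate. My plan to attack it is to refine the identification of $\psi_n$: by writing $\psi_n = \psi_0 + \delta_n$ with $\|\delta_n\|_1 \to 0$ and using strict positivity of $f_0$ together with the support/positivity information carried by assumptions (A2)-(A3), I would try to isolate a compensating positive contribution — for example by showing that $\psi_n$ concentrates on sets where $w_n$ is close to one while $f_0 \psi_0$ and $f_0 \psi_n$ differ on sets controlled by $c_n$. I expect this balance — not the limit, which is routine from the perturbation argument — to be the delicate step of the whole proof.
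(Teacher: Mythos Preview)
Your convergence argument is the paper's own proof with the details filled in. The paper's entire argument reads: ``Clearly $T_n \to T_0$ uniformly. Since $T_0$ and $T_n$ are compact operators and $\rho(T_0)$ and $\rho(T_n)$ are simple eigenvalues of $T_0$ and $T_n$ respectively, we have the conclusion of our lemma.'' Your factorization $T_n = T_0 M_{w_n}$, the DCT-based operator-norm estimate, and the Riesz-projection continuity simply make explicit what that sentence leaves implicit.

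For the inequality the paper offers nothing beyond the same sentence, and your dual-pairing identity
\[
\rho(T_n)\,\langle f_0,\psi_n\rangle \;=\; \rho(T_0)\,\langle f_0, w_n\psi_n\rangle,
\]
combined with $0 < w_n \le 1$, $f_0>0$, $\psi_n>0$, already yields $\rho(T_n) \le \rho(T_0)$ --- the \emph{reverse} of what is claimed. This agrees with the standard monotonicity of the Perron root under pointwise domination of positive kernels: the kernel $\phi(a,\sigma)w_n(\sigma)$ of $T_n$ is dominated by the kernel $\phi(a,\sigma)$ of $T_0$. The ``principal obstacle'' you flag is therefore not a gap in your method but a defect in the stated direction of the inequality. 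No refinement through $\psi_n = \psi_0 + \delta_n$ can produce $\langle f_0, w_n\psi_n\rangle \ge \langle f_0,\psi_n\rangle$, since that fails for \emph{every} nonnegative $\psi_n$ whenever $w_n < 1$ on a set of positive measure. Abandon the plan in your final paragraph; record instead the correct inequality $\rho(T_n) \le \rho(T_0)$, and note that the paper's downstream use of the lemma in the proof of Theorem~\ref{thm 4.3} inherits this sign issue.
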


\begin{proof} Clearly $T_{n} \longrightarrow T_{0} $ uniformly. Since $T_{0}$ and $T_{n}$ are compact operators and $\rho(T_{0})$ and $\rho(T_{n})$ are simple eigen values of $T_{0}$ and $T_{n}$ respectively, we have the conclusion of our lemma.
\end{proof}
\vspace{0.3cm}

Now, we are ready to prove our Theorem \ref{thm 4.3} \\
\begin{proof} We know that any converging subsequence $\{\psi_{n_{k}}\}$ of $\{ \psi_{n} \}$ converges to $\psi$, the fixed point of $\Phi_{0}$. By Theorem \ref{thm 4.6}, for $\rho(T_{0}) \le 1, \Phi_{0}$ has only one fixed point which is $0$.  So, every convergent subsequence of $\{\psi_{n}\}$ converges to zero, i.e. the sequence $\{\psi_{n} \}$ converges to zero.
Now, we will prove the part (b) of the theorem. \\
Given $\rho(T_{0}) > 1$, by lemma \ref{lemma 4.10} we have $ \rho(T_{n}) >1 ~~ \forall n .$ \\
Let $ f_{n} \in (L_{+}^{1}(0,a_{m}))^{*} \setminus \{0\} $ be the strictly positive eigen vector of $T_{n}^{*}$ with eigen value $\rho(T_{n})$. Then for all $n$, we have
$$ \langle f_{n},\psi_{n} \rangle ~=~ \langle f_{n},\Phi_{n} \psi_{n} \rangle ~=~
\langle f_{n}, \bar{\Phi}_{n} \psi_{n}+u_{n} \rangle
$$
$$\text{where}~~ \bar{\Phi}_{n} = \Phi_{n}-u_{n}~~\text{and}~ u_{n}~ \text{is defined by integral on R.H.S. of (\ref{4.6n}) with c replaced by}~ c_{n}. $$
Observe that $$ \exp(-\| \psi \|_{1}) T_{0} \psi \le \bar{\Phi} \psi \le T_{0} \psi~~~ \forall \psi \in L_{+}^{1}(0,a_{m})  $$
Therefore,
$$  \langle f_{n},\psi_{n} \rangle \ge \langle f_{n},\exp(-\| \psi_{n} \|_{1}) T_{n} \psi_{n}+u_{n} \rangle >  \langle f_{n},\exp(-\| \psi_{n} \|_{1}) T_{n} \psi_{n} \rangle $$
$$  =\exp(-\| \psi_{n} \|_{1}) \langle T_{n}^{*}f_{n},\psi_{n} \rangle = \exp(-\| \psi_{n} \|_{1}) \rho(T_{n}) \langle f_{n},\psi_{n} \rangle   $$
Therefore, $$ \exp(-\| \psi_{n} \|_{1}) \rho(T_{n}) <1~~ \forall n $$
$$ i.e. ~~ \| \psi_{n} \|_{1} > \log(\rho(T_{n})) \ge \log(\rho(T_{0})) $$
$$ \text{choose}~~ \gamma = \log(\rho(T_{0})) >0 $$
then the conclusion of our theorem holds. \end{proof}

\mysection{Discussion}	
The figures on data related to interaction show that the age plays a crucial role in SARS diseases and especially in COVID-19 infection as well as in recovery. So, we have studied an age structured SIR model in which susceptible individuals not only get infected due to direct contact with infected person, but can also get infected due to contact with contaminated surfaces.  We proved that there is no disease free equilibrium as long as there is transmission due to indirect contacts in the environment. That means for instance if there are fomites present in the environment contaminated with pathogenic micro-organisms, disease still can spread without direct contact between susceptible and infected individuals. So, removing fomites present on the surfaces is one of the effective measure to slow the infection. Hence sanitization of surfaces and proper care to frontline workers will help to fight with such diseases.


\bibliographystyle{amsplain}

\end{document}